\newcommand{\ignore}[1]{}
\DeclareRobustCommand*\cal{\@fontswitch\relax\mathcal}
\renewcommand\footnotetextcopyrightpermission[1]{} 
\begin{document}

\title{Efficient Generation of Geographically Accurate Transit Maps}

\author{Hannah Bast}
\affiliation{%
  \institution{University of Freiburg}
  \city{Freiburg}
  \country{Germany}
}
\email{bast@cs.uni-freiburg.de}

\author{Patrick Brosi}
\affiliation{%
  \institution{University of Freiburg}
  \city{Freiburg}
  \country{Germany}
}
\email{brosi@cs.uni-freiburg.de}

\author{Sabine Storandt}
\affiliation{%
  \institution{JMU W\"urzburg}
  \city{W\"urzburg}
  \country{Germany}
}
\email{storandt@informatik.uni-wuerzburg.de}


\begin{abstract}
We present LOOM (Line-Ordering Optimized Maps), a fully automatic generator of geographically accurate transit maps.
The input to LOOM is data about the lines of a given transit network, namely for each line, the sequence of stations it serves and the geographical course the vehicles of this line take. We parse this data from GTFS, the prevailing standard for public transit data.
LOOM proceeds in three stages:
	(1) construct a so-called line graph, where edges correspond to segments of the network with the same set of lines following the same course;
	(2) construct an ILP that yields a line ordering for each edge which minimizes the total number of line crossings and line separations;
	(3) based on the line graph and the ILP solution, draw the map.
As a naive ILP formulation is too demanding, we derive a new custom-tailored formulation which requires significantly fewer constraints. Furthermore, we present engineering techniques which use structural properties of the line graph to further reduce the ILP size. For the subway network of New York, we can reduce the number of constraints from 229,000 in the naive ILP formulation to about 4,500 with our techniques, enabling solution times of less than a second. Since our maps respect the geography of the transit network, they can be used for tiles and overlays in typical map services. Previous research work either did not take the geographical course of the lines into account, or was concerned with schematic maps without optimizing line crossings or line separations.
\end{abstract}

\def\UrlFont{\normalsize}
\maketitle
\urlstyle{same}
\section{Introduction}\label{SEC:intro}

Cities with a public transit network usually have a map which illustrates the network and which is posted at all stations. Many map services also feature a transit layer where all lines and stations in an area are displayed.
Such a map should satisfy the following main criteria:\vspace{0pt}

\begin{enumerate}[parsep=0.5mm, wide, labelwidth=0mm, itemindent=2.3mm]
  \setlength\itemsep{0pt}
\item  It should depict the topology of the network: which transit lines are offered, which stations do they serve in which order, and which transfers are possible.
\item It should be neatly arranged and esthetically pleasing.
\item It should reflect the geographical course of the lines, at least to some extent.
\end{enumerate}

\smallskip\noindent
So far, such maps have been designed and drawn by hand.
Concerning (3), the designers usually take some liberty, either to make the map fit into a certain format or to simplify the layout, or both.
\begin{figure}[ht]
\centering
\vspace{1em}
\includegraphics[width=0.23\textwidth]{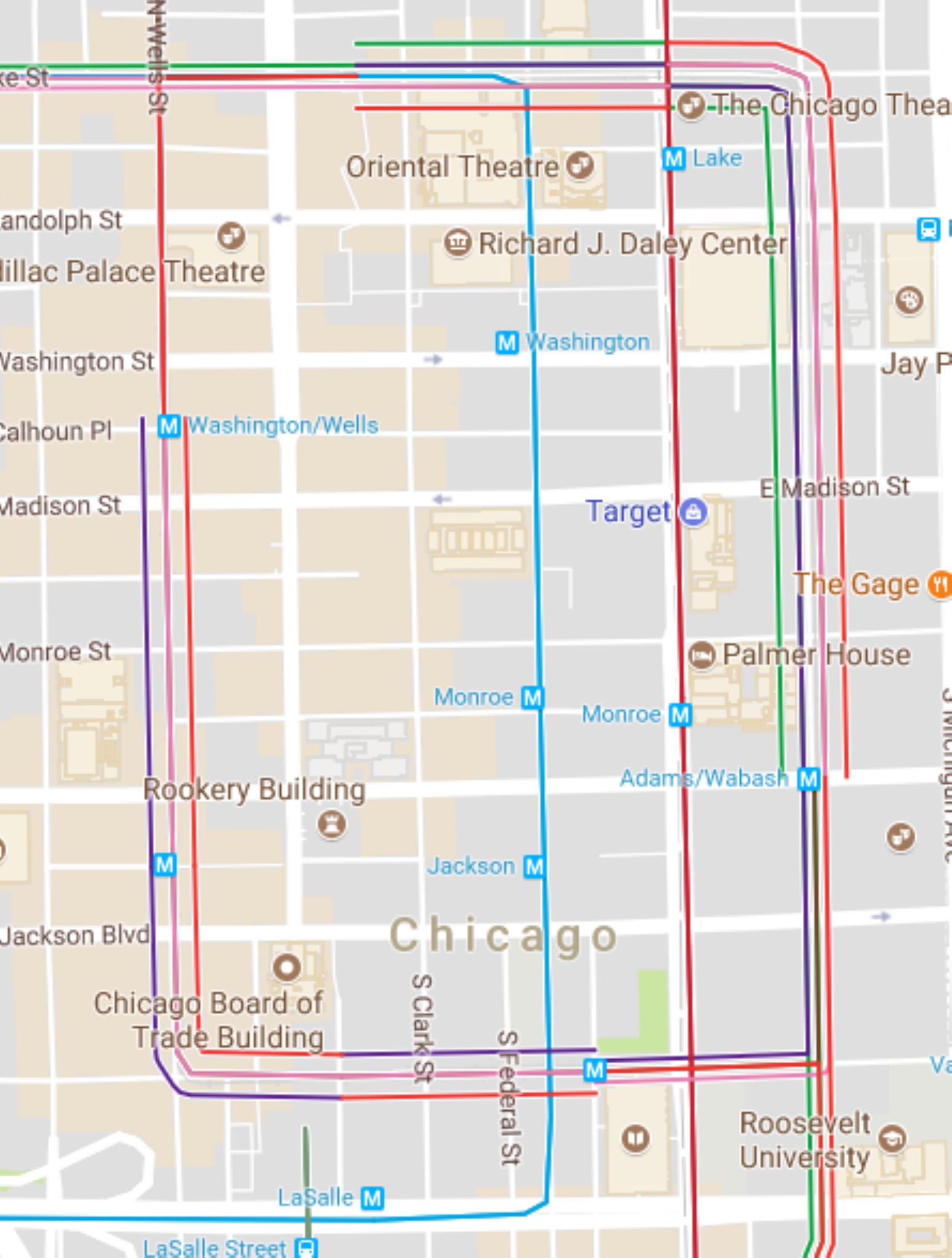}
\hfill
\includegraphics[width=0.23\textwidth]{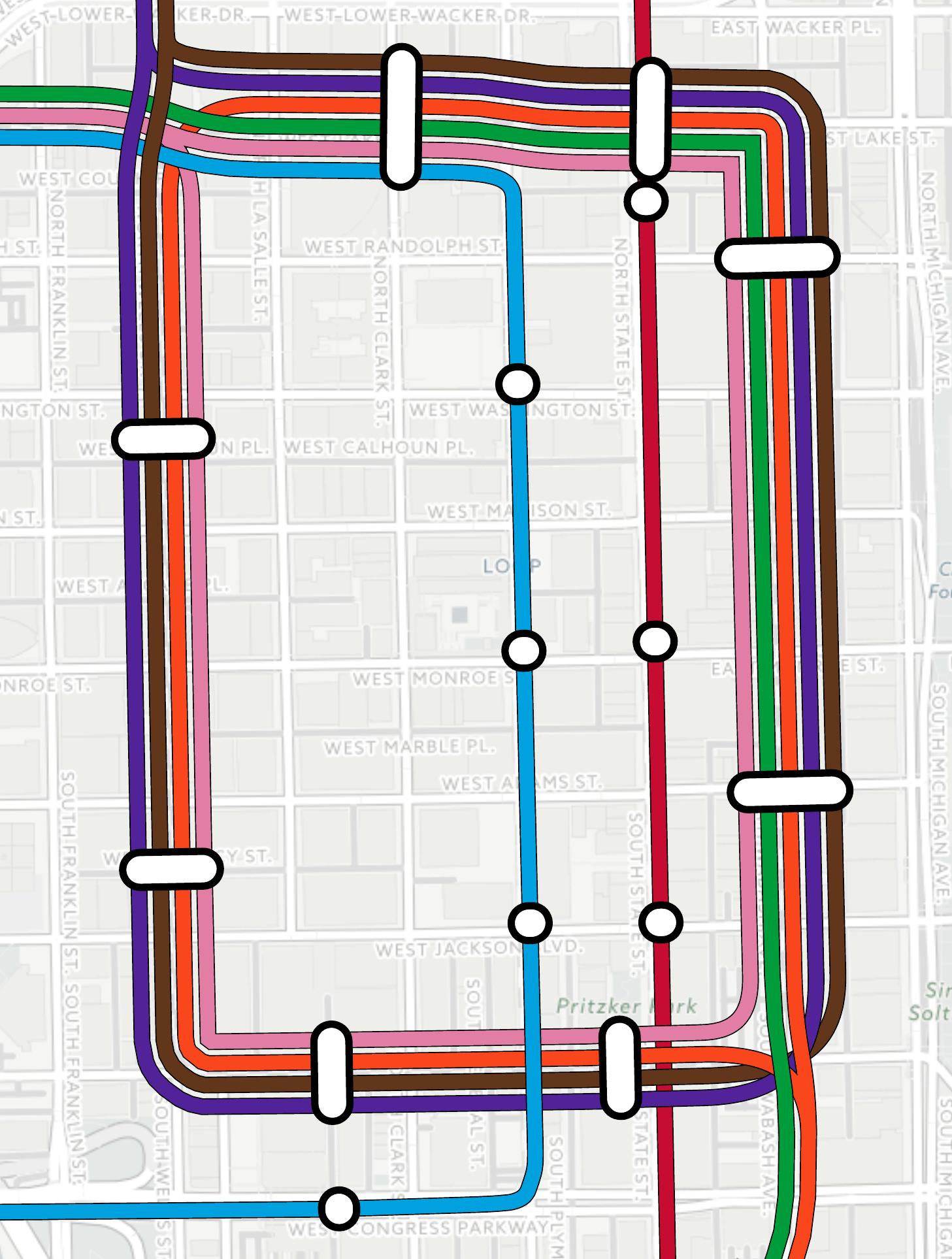}
\caption{Left: Google transit map cutout for Chicago. Right: LOOM map for the same area.}
\vspace{-1em}
\end{figure}

\begin{figure*}
  \centering
  \includestandalone[trim={19pt, 3pt, 0, 0}, clip, width=0.37\textwidth]{render_examples/vvs_transit_graph}
  \hspace{2cm}
  \includegraphics[trim={0cm 0 2.47cm 4.07cm},clip,width=0.37\textwidth]{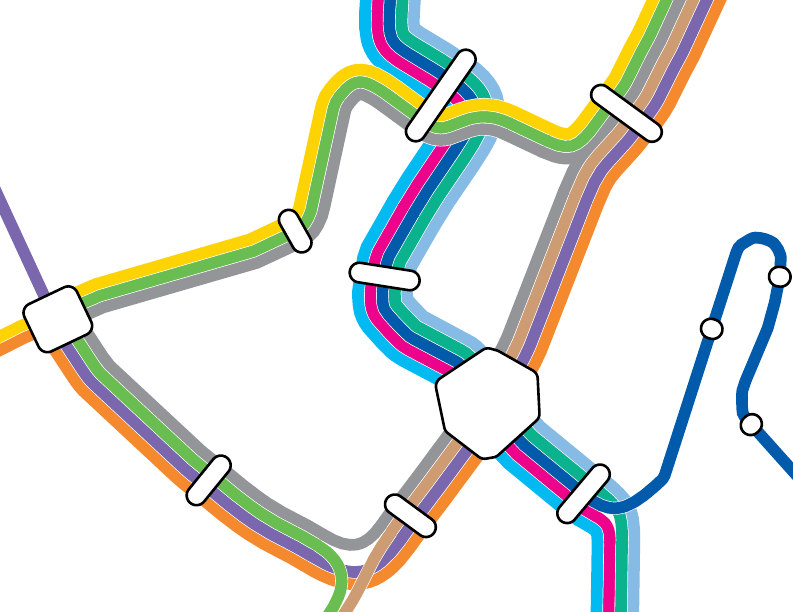}
  \caption{Left: Excerpt from a line graph which LOOM constructs for the 2015 light rail network of the city of Stuttgart from the given GTFS data. Each edge corresponds to a segment of the network where the same set of lines takes the same geographical course. Segment boundaries are often station nodes (large) but may also be intermediate nodes (small). The line ids for each segment are given in ascending order. LOOM's central optimization step computes a line ordering that determines how the lines are drawn in the map, and where line crossings and separations occur. Right: The corresponding excerpt from LOOM's transit map.}
  \label{FIG:transitgraphvvs}
\end{figure*}
The goal of this paper is to produce transit maps fully automatically, adhering to (3) rather strictly: within a given tolerance, the lines on the map should be drawn according to their geographical course. This rises several algorithmic challenges; in particular because the geographical course of some lines may overlap partially. These lines should then of course not be rendered on top of each other as this would obfuscate the visibility. Instead, they should be drawn next to each other. This requires to first identify overlapping parts and then to choose the line ordering in the rendered map. A bad ordering can lead to many unnecessary line crossings. Hence our goal is to find orderings that minimize these undesired crossings. As the number of possible orderings exceeds an octillion even for the transit network of medium sized cities, we need to develop efficient methods to find the best ordering in reasonable time.

\def\Hl{L}

\subsection{Overview and Definitions}\label{SEC:intro:definition}

LOOM proceeds in three stages, which we briefly describe in the following along with some notation and terminology that will be used throughout the paper.
Each stage is described in more detail in one of the following sections.

\smallskip\noindent
{\bf Input:}
The input to LOOM is a set ${\cal S}$ of stations and a set ${\cal L}$ of lines.
Each station has a geographical location.
Each line has a unique ID (in our examples: numbers), the sequence of stations it serves, and the geographical course between them.
This data is usually provided as part of a network's GTFS feed.


\smallskip\noindent
{\bf Line graph construction (Sect. \ref{SEC:graph}):}
In its first stage, LOOM computes a \emph{line graph}.
This is an undirected labeled graph $G = (V, E, \Hl)$, where $V \supseteq {\cal S}$ (each station is a node, but there may be additional nodes), $E$ is the set of edges, and each $e \in E$ is labeled with a subset $\Hl(e) \subseteq {\cal L}$ of the lines.
Intuitively, each edge corresponds to a segment of the network, where the same set of lines takes the same geographical course (within a certain tolerance), and there is a node wherever such a set of lines splits up in different directions.
Figure~\ref{FIG:transitgraphvvs} shows the line graph for an excerpt from the light rail network of Stuttgart.
We will see that the complexity of our algorithms in Sect.~\ref{SEC:ordering} depends on $M = \max_{e \in E} |\Hl(e)|$, the maximal number of lines per segment.

\smallskip\noindent
{\bf Line ordering optimization (Sect.~\ref{SEC:ordering}):}
In its second stage, LOOM computes an \emph{ordering} of $\Hl(e)$ for each $e \in E$.
This ordering determines where line crossings and separations occur, and is hence critical for the final map appearance.
Previous research referred to the problem of minimizing crossings as the metro-line crossing minimization problem (MLCM), see Sect~\ref{SEC:related}. We call a strongly related problem the metro-line node crossing minimization problem (MLNCM) with an optional line separation penalty (MLNCM-S) and formulate a concise Integer Linear Program (ILP) to solve it.

\smallskip\noindent
{\bf Rendering (Sect.~\ref{SEC:rendering}):}
In its third stage, LOOM draws the transit map based on the line graph from stage 1 and the ordering from stage 2.
Each station node $v$ is drawn as a polygon, where each side of the polygon corresponds to exactly one incident edge of $v$.
We call this side the \emph{node front} of that edge at that node.
The node front for an edge $e$ has $|\Hl(e)|$ so-called \emph{ports} (Fig.~\ref{FIG:crossings}).
Drawing the map then amounts to connecting the ports (according to the ordering computed in stage 2) and drawing the station polygons.
Figure \ref{FIG:transitgraphvvs}, right, shows a rendered transit map after layout optimization.

\subsection{Contributions}\label{SEC:intro:contrib}
\begin{itemize}[topsep=0pt, parsep=0.5mm,leftmargin=0mm,itemindent=4mm,itemsep=-1pt]
\renewcommand\labelitemi{$\bullet$}
\item We present a new automatic map generator, called LOOM (Line-Ordering Optimized Maps), for geographically accurate transit maps. The input is basic schedule data as provided in a GFTS feed.
 This is, as far as we know, the first research paper on this problem in its entirety. Previous research work considers only parts of this problem (oblivious either to the geographical course or to the order of the lines) and does not yield maps that can be used for tiles and overlays in typical map services.

\item We describe a line-sweeping approach to extract the line graph from a set of (partially overlapping) vehicle trips as they occur in real-world schedule data.
\item We phrase the crossing minimization problem in a novel way and provide an ILP formulation to solve it. Our new model resolves some issues with previous models, in particular, the restricted applicability of some algorithms to planar graphs, and the necessity of artificial grouping of crossings (which happens naturally with our approach).

\item As a naive ILP formulation turns out to lead to impractically many constraints, we derive an alternative formulation yielding significantly smaller ILPs in theory and practice.

\item We describe engineering techniques which allow to further simplify the line graph and hence lead to even smaller ILPs without compromising optimality of the final result.

\item We evaluate LOOM on the transit network of six cities around the world.
For each city, line graph construction, ILP solution and rendering together take less than 1 minute.

\item Our maps are publicly available online\footnote{\url{http://loom.informatik.uni-freiburg.de}}.
\end{itemize}

\subsection{Related Work}\label{SEC:related}

Previous research on the metro-line crossing minimization problem (MLCM), as briefly summarized in the following, typically comes without experimental evaluations and without the production of actual maps.
The problem of minimizing intra-edge crossings in transit maps was introduced in~\cite{ben06},
with the premises of not hiding crossings under station markers for esthetic reasons.
A polynomial time algorithm for the special case of optimizing the layout along a single edge was described.
The term MLCM was coined in~\cite{bek07}.
In that paper, optimal layouts for path and tree networks were investigated but arbitrary graphs were left as an open problem.
In~\cite{arg08, arg10, nol09}, several variants of MLCM were defined and efficient algorithms were presented for some of these variants, often with a restriction to planar graphs.
In \cite{asq08}, an ILP formulation for MLCM under the periphery condition (see Sect.~\ref{SEC:separation}) was introduced.
The resulting ILP was shown to have a size of $\mathcal{O}(|L|^2|E|)$ with $L$ being the set of lines and $E$ the set of edges in the derived graph.
In~\cite{fin13b}, it was observed that many (unavoidable) crossings scattered along a single edge are also not visually pleasing,
and hence crossings were grouped into so-called block crossings.
The problem of minimizing the number of block crossings was shown to be NP-hard on simple graphs just like the original MLCM problem \cite{fin13a}.
Our adapted MLNCM problem has the same complexity as MLCM and is hence also NP-hard.

Our line graph construction is related to edge bundling. The goal of edge bundling in general networks is to group edges in order to save ink when drawing the network. Usually, the embedding of the edges is not fixed a priori but can be chosen such that many bundles occur (possibly respecting side constraints as edges being short). For example, in \cite{hol09} a force-directed heuristic was described where edges attract other edges to form bundles automatically. Opposed to this, we are not allowed to embed edges arbitrarily as we want to maintain the geographical course of the vehicle trajectories. In \cite{pup11}, edge bundling in the context of metro line map layout was discussed, also considering orderings within the bundles to minimize crossings. But for their approach to work, the underlying graph has to fulfill a set of restrictive properties. For example, the so called \emph{path terminal property} demands that a node in the graph cannot be an endpoint of one line and an intermediate node of another line at the same time. But this structure regularly appears in real-world instances. For example, a local train might end at the main station of a town, while a long-distance train might have this station only as an intermediate stop. Also self-intersections are forbidden which excludes instances with cyclic subway lines. With these additional properties required in \cite{pup11} the problem becomes significantly easier but is no longer applicable to most real-world instances.

Another line of research focuses on drawing \emph{schematic} metro maps, for example, by restricting the representation of transit lines to octilinear polylines~\cite{hon06} or B\'ezier Curves~\cite{fin12}. See also~\cite{nol14} for a recent survey on automated metro map layout methods. These approaches strongly abstract from the geographical course of the lines (and often also from station positions), and the minimization of line crossings or separations is not part of the problem.
In particular, the resulting maps cannot be used for tiles or overlays in typical map services.

There is also some applied work on transit maps, but without publications of the details.
One approach that seems to use a model similar to ours was described by Anton Dubreau in a blog post~\cite{dub16} although without a detailed discussion of their method. As far as we are aware there are no papers on MLCM concerned with real public transit data.

\section{Line Graph Construction}\label{SEC:graph}

This section describes stage 1 of LOOM: given line data, construct the line graph.
We assume that the data is given in the GTFS format \cite{gtfs}.
In GTFS, each trip (that is, a concrete tour of a vehicle of a line) is given explicitly and the graph induced by this data has many overlapping edges that may (partially) share the same path.

Let $e_1, e_2$ be two edges in $G$ with their geometrical paths $\tau_{e_1}$ and $\tau_{e_2}$. We define a parametrization $\tau(t): [0,1] \mapsto \mathbb{R}^2$ which maps the progress $t$ to a point $p \in \mathbb{R}^2$ on $\tau$. To decide whether a segment of $\tau_{e_1}$ is similar to a segment of $\tau_{e_2}$, we use a simple approximation. For some distance threshold $\hat{d}$, we say $\left(\left(t_1, t_2\right), \left({t'}_1, {t'}_2\right)\right)$ is a shared segment of $e_1$ and $e_2$ if
\begin{equation}
	\forall u \in [t_1, {t'}_1]: \exists u' \in [t_2, {t'}_2]:\left\|\tau_{e_1}\left(u\right) - \tau_{e_2}\left(u'\right)\right\| \leq \hat{d}.
\end{equation}

We transform $G$ into a line graph $G'$ by repeatedly combining shared segments between two edges $e_1 = \{u_1, v_1\}$ and $e_2 = \{u_2, v_2\}$ into a single new edge $e_{12}$ until no more shared segments can be found. The new path $\tau_{e_{12}}$ is averaged from the shared segments on $\tau_{e_1}$ and $\tau_{e_2}$. Two new non-station nodes $u'$ and $v'$ are introduced and split $e_1$ and $e_2$ such that $e_1 = \{u_1, u'\}$, $e_2 = \{u_2, u'\}$, $e'_1 = \{v', v_1\}$, $e'_2 = \{v', v_2\}$ and $e_{12} = \{u', v'\}$. Note that the new non-station nodes $v'$ and $u'$ will always have a degree of 3.

To find the shared segments between $\tau$ and $\tau'$, we sweep over $\tau$ in $n$ steps of some $\Delta t$, measuring the distance $d$ between $\tau(i\cdot\Delta t)$ and $\tau'$ at each $i < n$ along the way. If $d \leq \hat{d}$, we start a new segment. If $d > \hat{d}$ and a segment is open, we close it. The algorithm can be made more robust against outliers by allowing $d$ to exceed $\hat{d}$ for a number of $k$ steps. It can be sped up by indexing every linear segment of every path in a geometric index (for example, an R-Tree).

\section{Line Ordering Optimization}\label{SEC:ordering}

This section describes stage 2 of LOOM, namely how to solve MLNCM: given a line graph, compute an ordering of the lines for each edge such that the total number of  crossings in the final map is minimized. Contrary to the classic MLCM problem, which imposes a right and left ordering on each $L(e)$ and allows crossings to occur anywhere on $e$, MLNCM only imposes a single ordering on each edge and restricts crossing events to nodes. This will proof advantageous during rendering, see Sect.~\ref{SEC:rendering}. As the set of stations $\cal S$ is only a subset of $V$ in our model (Sect.~\ref{SEC:intro:definition}), we can still avoid line crossings in them.

For each edge $e$, there are $|\Hl(e)|!$ many orderings, therefore the total number of combinations for the whole graph is immense.
We formulate an ILP to find an optimal solution.
We first define a baseline ILP which explicitly considers line crossings and has $\mathcal{O}(|E|M^{2})$ variables and $\mathcal{O}(|E|M^{6})$ constraints.
We then define an improved ILP with only $\mathcal{O}(|E|M^2)$ constraints and which also considers line separations (MLNCM-S).

\def\Hsum{\sum\nolimits}
\subsection{Baseline ILP}\label{SEC:baseline}
For every edge $e \in E$, we define $\left|L\left(e\right)\right|^{2}$ decision variables $x_{elp} \in \{0,1\}$ where $e$ indicates the edge, $l \in L(e)$ indicates the line, and $p=1, ..., \left|L\left(e\right)\right|$ indicates the position of the line in the edge. We want to enforce $x_{elp}=1$ when line $l$ is assigned to position $p$, and $0$ otherwise. This can be realized with the following constraints:
\begin{equation}
\forall l \in L(e): \Hsum_{p=1}^{L(e)} x_{elp}=1.  \label{EQ:p_constr1_bl}
\end{equation}
To ensure that exactly one line is assigned to each position, we need the following additional constraints:
\begin{equation}
	\forall p \in  \left\{1,...,\left|L\left(e\right)\right|\right\} : \Hsum_{l \in L(e)} x_{elp} = 1. \label{EQ:p_constr2_bl}
\end{equation}
\begin{figure}
  \centering
	$\vcenter{\hbox{\includegraphics[width=0.47\textwidth]{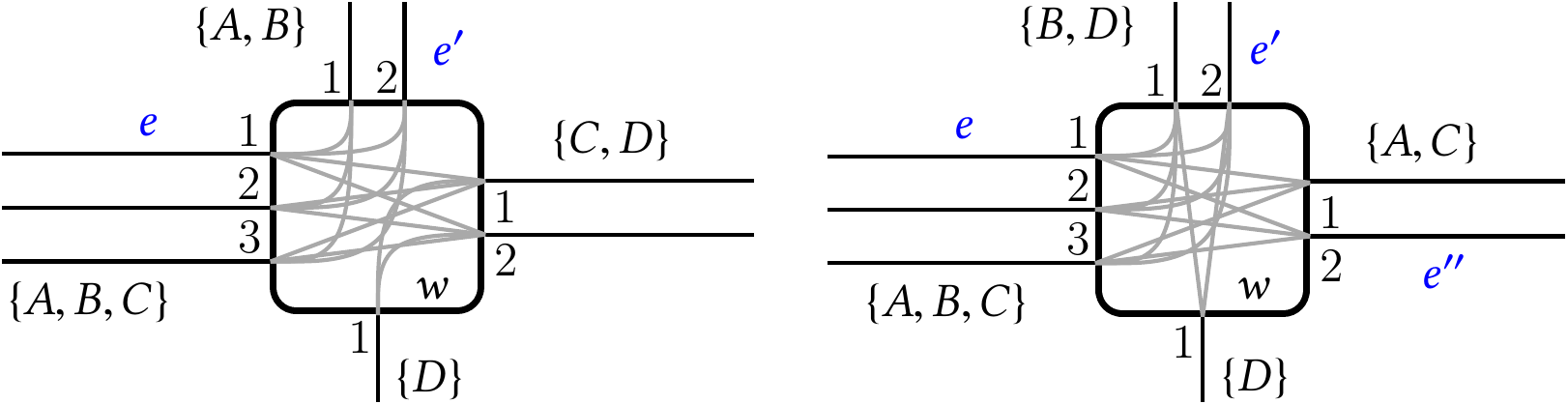}}}$
	\caption{Example instances. Both station polygons have 4 \emph{node fronts}, each corresponding to an incident edge. Each node front has exactly one port ($1, 2, ...$) for each line traversing through its edge.}
	\label{FIG:crossings}
\end{figure}Let $A, B$ be two lines belonging to an edge $e = \{v,w\}$ and both extend over $w$.
We distinguish two cases: either $A$ and $B$ continue along the same adjacent edge $e'$ (Fig.~\ref{FIG:crossings}, left), or they continue along different edges $e'$ and $e''$ (Fig.~\ref{FIG:crossings}, right).

In the first case, $A$ and $B$ induce a crossing if the position of $A$ is smaller than the position of $B$ in $L(e)$, so $p_e(A) < p_e(B)$, but vice versa in $L(e')$. We introduce the decision variable $x_{ee'AB} \in \{0,1\}$, which should be $1$ in case a crossing is induced and $0$ otherwise. To enforce this, we create one constraint per possible crossing. For example, a crossing would occur if we have $p_e(A)=1$ and $p_e(B)=2$ as well as $p_{e'}(A)=2$ and $p_{e'}(B) =1$. We encode this as follows:
\begin{align}
	x_{eA1} + x_{eB2} + x_{e'A2} + x_{e'B1} - x_{ee'AB} \leq 3  \label{EQ:crossdec_constr_bl}.
\end{align}
In case the crossing occurs, the first four variables are all set to 1. Hence their sum is 4 and the only way to fulfill the $\leq 3$ constraint is to set $x_{ee'AB}$ to $1$. In the example given in Fig.~\ref{FIG:crossings}, six such constraints are necessary to account for all possible crossings of the lines $A$ and $B$ at node $w$.
The objective function of the ILP then minimizes the sum over all variables $x_{ee'AB}$.

In the second case, the actual positions of $A$ and $B$ in $e'$ and $e''$ do not matter, but just the order of $e'$ and $e''$. We introduce a split crossing decision variable $x_{ee'e''AB} \in \{0,1\}$ and constraints of the form $x_{eAi} + x_{eBj} - x_{ee'e''AB} \leq 1$ for all orders of $A$ and $B$ at $e$ with $i < j$ as in that case a crossing would occur.
We add $x_{ee'e''AB}$ to the objective function.

For mapping lines to positions at each edge we need at most $|E|M^{2}$ variables and $2|E|M$ constraints. To minimize crossings, we have to consider at most $M^{2}$ pairs of lines per edge, and introduce a decision variable for each such pair. That makes at most $|E| M^{2}$ additional variables, which all appear in the objective function. Most constraints are introduced when two lines continue over a node in the same direction. In that case, we create no more than $\binom{M}{2}^{2} < M^{4}$ constraints per line pair per edge, so at most $|E| M^{6}$ in total.
In summary, we have $\mathcal{O}(|E|M^{2})$ variables and $\mathcal{O}(|E|M^{6})$ constraints.

\subsection{Improved ILP Formulation}\label{SEC:improved}
The $\mathcal{O}(|E|M^{2})$ variables in the baseline ILP seem to be reasonable, as indeed $\Omega(|E|M^{2})$ crossings could occur. But the $\mathcal{O}(|E|M^{6})$ constraints are due to enumerating all possible position inversions explicitly. If we could check the statement \emph{position of A on $e$ is smaller than the position of B} efficiently, the number of constraints could be reduced. To have such an oracle, we first modify the line-position assignment constraints.

Instead of a decision variable encoding the exact position of a line, we now use $x_{el\leq p} \in \{0,1\}$ which is $1$ if the position of $l$ in $e$ is $\leq p$ and $0$ otherwise. To enforce a unique position, we use the constraints:
\begin{align}
	\forall l \in L(e)~ \forall p \in \{1, ..., \left|L\left(e\right)\right|-1\}: \quad x_{el\leq p} \leq x_{el\leq p+1}. \label{EQ:up_constr}
\end{align}
This ensures that the sequence can only switch from $0$ to $1$, exactly once. To make sure that at some point a $1$ appears and that each position is occupied by exactly one line, we additionally introduce the following constraints:
\begin{align}
	\forall p \in \{1, ..., \left|L\left(e\right)\right|\}: \quad  \Hsum_{l \in L(e)} x_{el\leq p} = p. \label{EQ:01_constr}
\end{align}
So for exactly one line $l$, $x_{el\leq 1} = 1$, for exactly two lines $l'$ and $l''$, $x_{el'\leq 2} = x_{el''\leq 2} = 1$ (where for one $l \in \{l',l''\}$, $x_{el\leq 1} =1$) and so on.

We reconsider the example in Fig.~\ref{FIG:crossings}, left. Before, we enumerated all possible positions which induce a crossing for $A, B$ at the transition from $e$ to $e'$. But it would be sufficient to have variables which tell us whether the position of $A$ is smaller than the position of $B$ in $e$, and the same for $e'$, and then compare those variables. For a line pair $(A,B)$ on edge $e$ we call the respective variables $x_{eB<A}, x_{eA<B} \in \{0, 1\}$. 
To get the desired value assignments, we add the following constraints:
\begin{gather}
	\Hsum_{p=1}^{\left|L\left(e\right)\right|} x_{eA\leq p} - \Hsum_{p} x_{eB\leq p} + x_{eB<A} M \geq 0 \\
	x_{eB<A} + x_{eA<B}=1.
\end{gather}
The equality constraints make sure that not both $x_{eA<B}$ and $x_{eB<A}$ can be $1$. If the position of $A$ is smaller than the position of $B$, then more of the variables corresponding to $A$ are $1$, and hence the sum for $A$ is higher. So if we subtract the sum for $B$ from the sum for $A$ and the result is $\geq 0$, we know the position of $A$ is smaller and $x_{eB<A}$ can be $0$. Otherwise, the difference is negative, and we need to set $x_{eB<A}$ to $1$ to fulfill the inequality. It is then indeed fulfilled for sure as the position gap can never exceed the number of lines per edge.

To decide if there is a crossing, we would again like to have a decision variable $x_{ee'AB} \in \{0,1\}$ which is $1$ in case of a crossing and $0$ otherwise. The constraint
\begin{gather}
	\left|x_{eA<B}-x_{e'A<B}\right| - x_{ee'AB} \leq 0
\end{gather}
realizes this, as either $x_{eA<B} = x_{e'A<B}$ (both $0$ or both $1$) and then $x_{ee'AB}$ can be $0$, or they are not equal and hence the absolute value of their difference is $1$, enforcing $x_{e'AB}=1$. As absolute value computation cannot be part of an ILP we use the following equivalent standard replacement:
\begin{align}
	x_{eA<B} - x_{e'A<B} - x_{ee'AB} &\leq 0 \label{EQ:abs_upper}\\
	-x_{eA<B} + x_{e'A<B} - x_{ee'AB} &\leq 0. \label{EQ:abs_lower}
\end{align}

For the line-position assignment, we need at most $|E|M^{2}$ variables and constraints just like before.
For counting the  crossings, we need a constant number of new variables and constraints per pair of lines per edge.
Hence the total number of variables and constraints in the improved ILP is $\mathcal{O}(|E|M^{2})$.

\begin{figure*}[t]
\vspace{-.5em}
\centering
\begin{minipage}{.472\textwidth}
  \centering
	\includegraphics[trim={2.25 2.2 2.25 2.2},clip,width=.32\textwidth]{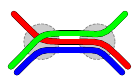}\hspace{28pt}
	\includegraphics[trim={2.25 2.2 2.25 2.2},clip,width=.32\textwidth]{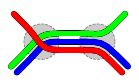}
	\caption{Minimized crossings in the left example, but the right example better indicates line pairings.}
	\label{FIG:linesplitting}
\end{minipage}%
\hfill
\begin{minipage}{.472\textwidth}
  \centering
	\includegraphics[trim={2.25 2.2 2.25 2.2},clip,width=.32\textwidth]{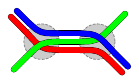}\hspace{28pt}
	\includegraphics[trim={2.25 2.2 2.25 2.2},clip,width=.32\textwidth]{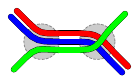}
	\caption{Both orderings have 2 crossings, but in the right example they are done in one pass.}
	\label{FIG:linesplitting2}
\end{minipage}
\vspace{-1em}
\end{figure*}

\subsection{Preventing Line Partner Separation}\label{SEC:separation}

So far, we have only considered the number of crossings.
Another relevant criterion for esthetic appeal is that ``partnering'' lines are drawn side by side.
Fig.~\ref{FIG:linesplitting} and Fig.~\ref{FIG:linesplitting2} provide two examples. We address this by punishing line separations and call this extension to our original MLNCM problem \mbox{MLNCM-S}. For two adjacent edges $e$ and $e'$ and a line pair $(A, B)$ that continues from $e$ to $e'$, if $A$ and $B$ are placed alongside in $e$ but not in $e'$, we want to add a penalty to the objective function. For this, we add a variable $x_{eA\|B} \in \{0, 1\}$ which should be $0$ if $\left|p_{e}(A) - p_{e}(B)\right| = 1$ (if they are partners in $e$) and $1$ otherwise. As $x_{eA\|B} = x_{eB\|A}$, we define a set $U(e)$ of unique line pairs such that $(l, l') \in U(e) \Rightarrow (l', l) \not\in U(e)$. We add the following constraints per line pair $(A, B)$ in $U(e)$:
\begin{align}
	\Hsum_{p=1}^{\left|L\left(e\right)\right|} x_{eA\leq p} - \Hsum_{p} x_{eB\leq p} - x_{eA\|B} M &\leq 1 \label{EQ:sep_cstr_1} \\
	\Hsum_{p=1}^{\left|L\left(e\right)\right|} x_{eB\leq p} - \Hsum_{p} x_{eA\leq p} - x_{eA\|B} M &\leq 1. \label{EQ:sep_cstr_2}
\end{align}
If $|p_{e}(A) - p_{e}(B)| = 1$, then the sum difference is $\leq 1$ and $x_{eA\|B}$ can be 0. If $|p_{e}(A) - p_{e}(B)| > 1$, then either (\ref{EQ:sep_cstr_1}) or (\ref{EQ:sep_cstr_2}) enforce $x_{eA\|B} = 1$. To prevent the trivial solution where $x_{eA\|B}$ is always 1, we add the following constraint per edge $e$:
\begin{align}
	\Hsum_{(l, l') \in U(e)} x_{el\|l'} \leq \binom{\left|L\left(e\right)\right|}{2} - \left|L\left(e\right)\right| - 1,  \label{EQ:sep_cstr_3}
\end{align}
as there are $\binom{\left|L\left(e\right)\right|}{2}$ line pairs $(l, l') \in U(e)$ of which $\left|L\left(e\right)\right| - 1$ are next to each other.

Like in Sect.~\ref{SEC:improved}, we add a decision variable $x_{ee'A\|B}$ to the objective function that should be $1$ if $A$ and $B$ are separated between $e$ and $e'$ and $0$ otherwise:
\begin{align}
	x_{eA\|B} - x_{e'A\|B} - x_{ee'A\|B} &\leq 0 \\
	-x_{eA\|B} + x_{e'A\|B} - x_{ee'A\|B} &\leq 0.
\end{align}

As we only add 1 constraint per edge and a constant number of constraints and variables per line pair in each edge, the total number of variables and constraints remains $\mathcal{O}(|E|M^2)$.

Interestingly, punishing line separations also addresses a special case of the periphery condition introduced in \cite{asq08}. In general, this condition holds if lines ending in a station are always drawn at the left- or rightmost position in each incident edge. For nodes with degree $\leq 2$, the periphery condition is ensured in \mbox{MLNCM-S} (Fig.~\ref{FIG:periphery}, left). For other nodes, however, it is not guaranteed (Fig.~\ref{FIG:periphery}, right).

\begin{figure}[b]
\centering
\begin{minipage}{.48\textwidth}
  \centering
	\includegraphics[trim={0 1.22 0 1.8},clip,width=.49\textwidth]{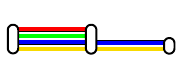}
	\hfill
	\includegraphics[trim={0 1.75 0 1.8},clip,width=.49\textwidth]{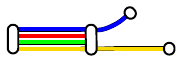}
	\caption{Left: Periphery condition guaranteed by separation penalty. Right: Periphery condition not guaranteed by separation penalty.}
	\label{FIG:periphery}
\end{minipage}
\vspace{-.4em}
\end{figure}

\subsection{Placement of Crossings or Separations}\label{SEC:crossing_placement}
The placement of crossings or separations may be fine-tuned by adding node-based weighting factors $w_\times(v)$ (for crossings) and $w_\|(v)$ (for separations) to the objective function to prefer nodes or to break ties. For example, $w_\times(v)$ may depend on the node degree.

As described above, we especially want to prevent crossings or separations in station nodes. This can be achieved by adding constant global weighting factors $w_{\cal S\times}$ and $w_{\cal S\|}$ to each $x_{ee'll'}$ and $x_{ee'l\|l'}$ in the objective function if $l$ and $l'$ continue over a node $v_s \in \cal S$. These factors have to be chosen high enough so that a crossing or separation in any other node $v \not\in \cal S$ is never more expensive than in $v_s$. As all $w_\times(v)$ and $w_\|(v)$ appear as coefficients in the objective function, they have to be invariant to the actual line orderings. We can thus determine the maximum possible costs $\hat w_\times$ and $\hat w_{\|}$ prior to optimization and choose $w_{\cal S\times} = \hat w_\times$ and $w_{\cal S\|} = \hat w_{\|}$.

\section{Core Graph Reduction}\label{SEC:coreprobgraph}
\begin{figure*}
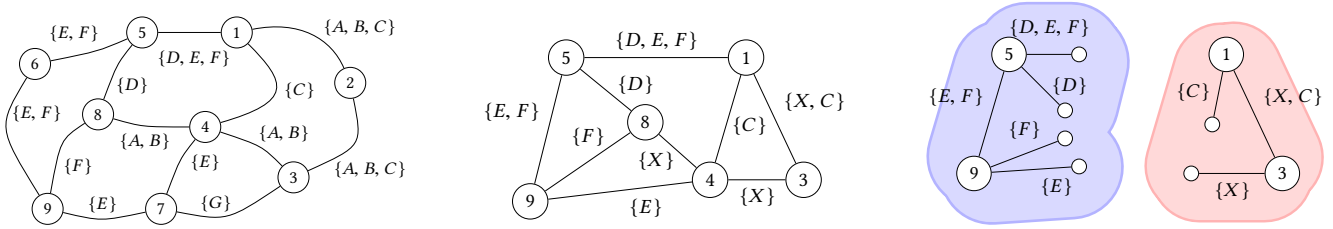

\centering
	\includestandalone[width=5.6cm]{tikz/transitgraph_optim}
	\hfill
	\includestandalone[width=5.1cm]{tikz/coreoptimgraph}
	\hfill
	\includestandalone[width=5.6cm]{tikz/connected_components}
	\caption{Left: line graph $G$ with 7 lines. Middle: core graph of $G$ after applying pruning rules, $\{A, B\}$ was collapsed into $\{X\}$. Right: ordering-relevant connected components of $G$ after applying splitting rules.}
	\label{FIG:coreoptimgraph}
	\vspace{-.4em}
\end{figure*}
It is possible to further simplify the optimization problem. We make the following observations:
\begin{lemma}\label{LEM:linepairing}
If for some set $\mathcal{B} = \{A, B, C, ...\} \subseteq \mathcal{L}$ it holds for all $l \in \mathcal{B}, e \in E: l \in L(e) \Rightarrow \mathcal{B} \subseteq L(e)$, then the optimal ordering is to always bundle $A, B, C, ...$ next to each other with a fixed, global ordering.
\end{lemma}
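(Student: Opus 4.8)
The plan is to show that any ordering of the lines can be transformed, without increasing the objective, into one in which the lines of $\mathcal{B}$ always appear as a contiguous block in every edge, listed in one fixed global order. The key structural fact is the hypothesis: whenever any line of $\mathcal{B}$ uses an edge $e$, \emph{all} of $\mathcal{B}$ uses $e$. Consequently, for two adjacent edges $e, e'$ the lines of $\mathcal{B}$ that continue from $e$ to $e'$ are either all of $\mathcal{B}$ (if $\mathcal{B} \subseteq L(e) \cap L(e')$) or none of them. This means the lines of $\mathcal{B}$ never split apart at a node: at any node incident to an edge carrying $\mathcal{B}$, the whole block $\mathcal{B}$ leaves together along a single adjacent edge (or terminates together). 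So the situation of Fig.~\ref{FIG:crossings}, right, never applies \emph{within} $\mathcal{B}$.

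First I would argue the block property. Fix an arbitrary optimal ordering. Suppose in some edge $e$ the lines of $\mathcal{B}$ are not contiguous, i.e.\ there is a line $z \notin \mathcal{B}$ with $p_e(A) < p_e(z) < p_e(B)$ for some $A, B \in \mathcal{B}$. I would show we may ``slide'' $z$ just past the block of $\mathcal{B}$ in $e$ (say to the side of $\mathcal{B}$ on which $z$ has fewer $\mathcal{B}$-lines, or either side if tied) without increasing the cost. The only terms of the objective affected are crossings/separations on $e$ between $z$ and lines of $\mathcal{B}$, at the two nodes bounding $e$. At each such node, because $\mathcal{B}$ moves as a unit to a single adjacent edge $f$ (or ends), the relative order of $z$ against the block $\mathcal{B}$ in $f$ is some fixed permutation; moving $z$ to one chosen side of the block in $e$ changes the number of $z$-vs-$\mathcal{B}$ inversions at that node by an amount that, summed over the two endpoints, is nonpositive for the side minimizing total inversions — a standard exchange/counting argument. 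Separations behave analogously: pulling $z$ out from inside the block can only reduce the number of adjacent $\mathcal{B}$-pairs that $z$ breaks. Iterating over all such interlopers $z$ and all edges makes every $\mathcal{B}$-block contiguous while the objective is nonincreasing.

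Second, with $\mathcal{B}$ contiguous in every edge, I would fix a global order $\pi$ on $\mathcal{B}$ and show the internal order of each block can be rotated to agree with $\pi$ at no cost. Since $\mathcal{B}$ never splits at a node, the internal order of $\mathcal{B}$ along a maximal walk carrying $\mathcal{B}$ is propagated edge-to-edge; any internal crossing between two lines of $\mathcal{B}$ at a node is avoidable by simply not permuting the block there, and any two distinct internal orders on two edges meeting at a node cost exactly one ``reordering'' of the block which can be removed by using the same order on both — again a local exchange argument, using that flipping the internal orientation of the block on one edge changes only $\mathcal{B}$-internal terms at its two endpoints. This forces a single internal order throughout each connected piece of $\mathcal{B}$'s support, and since these pieces are independent we may take it to be the same $\pi$ everywhere.

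The main obstacle I expect is the bookkeeping in the first step: carefully verifying that sliding an outside line $z$ across the block never \emph{increases} the combined crossing-plus-separation count at the two endpoints of $e$, including the weighted objective of Sect.~\ref{SEC:crossing_placement} (the node weights $w_\times(v), w_\|(v)$ are fixed constants independent of the ordering, so the exchange argument still goes through, but one must phrase it as ``choose the side of $\mathcal{B}$ that does not increase the weighted sum,'' which requires checking both sides can't simultaneously be strictly worse). A clean way to handle this is to treat the entire block $\mathcal{B}$ as a single ``virtual line'' of multiplicity $|\mathcal{B}|$: the cost contributions of $z$ against $\mathcal{B}$ at a node depend only on whether $z$ is left or right of the virtual line on each incident edge, reducing the claim to the trivial optimality statement for a single pair of lines, which the ILP's crossing/separation semantics already encode.
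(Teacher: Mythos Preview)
Your strategy is sound but takes a genuinely different route from the paper. The paper's proof is a two-sentence extremal argument: in an arbitrary solution $\sigma$, pick the single line $L \in \mathcal{B}$ that incurs the fewest crossings and separations; since every other $l \in \mathcal{B}$ traverses exactly the same set of edges as $L$, placing all of them immediately alongside $L$ gives each of them at most $L$'s cost, so the bundled solution is no worse than $\sigma$. Your proposal instead runs a local exchange argument in two phases (expel interlopers to make the block contiguous, then synchronize the internal order), which is considerably more granular. What your approach buys is explicit bookkeeping for the weighted objective and for separations --- details the paper's proof simply waves past --- at the price of substantially more case analysis. Conversely, the paper's extremal trick avoids your main obstacle (verifying that sliding $z$ to one side never increases the weighted sum at both endpoints simultaneously) by never moving non-$\mathcal{B}$ lines at all: it only \emph{relocates} the $\mathcal{B}$-lines to wherever $L$ already sits.

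One caution on your first phase: the claim that ``the only terms of the objective affected are crossings/separations on $e$ between $z$ and lines of $\mathcal{B}$'' is not quite right for separations, since sliding $z$ changes which lines are adjacent to $z$ on its \emph{outer} side as well, and those neighbors need not be in $\mathcal{B}$. Your closing ``virtual line of multiplicity $|\mathcal{B}|$'' idea is actually the cleanest fix and is much closer in spirit to the paper's argument than your main exchange steps; pursuing it directly would let you skip most of the case analysis.
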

\begin{proof}
Let $L \in \mathcal{B}$ be the line that induces the minimal number of crossings and separations for some solution $\sigma$. Since all $l \in \mathcal{B}$ take the exact same path through the network, a solution can only be better than or equal to $\sigma$ if it bundles all $l \neq L$ alongside $L$.
\end{proof}
\begin{lemma}\label{LEM:crossingmoveing}
	Given an optimal ordering for each $L(e)$. We say a node $v$ belongs to $W$ if $\text{deg}(v) = 2$ and for its adjacent edges $e$ and $e'$ the set of lines $L(e)$ is equal to $L(e')$. A crossing or a separation in some $v \in W$ can always be moved from $v$ to a node $v' \not\in W$ without negatively affecting optimality.
\end{lemma}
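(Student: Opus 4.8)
The plan is to start from an arbitrary optimal ordering that has a crossing or separation at a node $v\in W$ and to rewrite the orderings along the entire maximal chain of $W$-nodes through $v$, so that all crossings and separations on that chain are pushed onto its two endpoints (which lie outside $W$) while the objective value does not increase. The first step is structural: $v\in W$ forces $\deg(v)=2$ and $L(e)=L(e')$ on its two incident edges, and this line-set equality propagates, so along a maximal path $w_0\!-\!w_1\!-\!\cdots\!-\!w_k$ with $w_1,\dots,w_{k-1}\in W$ and $v=w_j$, every edge $h_i=\{w_{i-1},w_i\}$ carries one common line set $\ell$, and maximality gives $w_0,w_k\notin W$. (If the component is instead a single cycle of degree-$2$ nodes all with the same line set, an optimal solution uses one common ordering on every edge and has no crossings or separations at all, so there is nothing to move; the degenerate case $w_0=w_k$ is handled like the generic case below, with that node in the role of $w_k$.)

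Next I would record two elementary metric facts: for orderings $\sigma,\sigma'$ of $\ell$, the number of crossings they induce at a degree-$2$ node whose two incident edges both carry $\ell$ equals the inversion distance $d_\times(\sigma,\sigma')$, and the number of separations equals $d_\|(\sigma,\sigma')$, the size of the symmetric difference of the two sets of adjacent line pairs; both $d_\times$ and $d_\|$ are metrics. Writing $\rho_i$ for the ordering the given solution assigns to $h_i$, the triangle inequality then yields $\sum_{i=1}^{k-1}\!\big(d_\times(\rho_i,\rho_{i+1})+d_\|(\rho_i,\rho_{i+1})\big)\ \ge\ d_\times(\rho_1,\rho_k)+d_\|(\rho_1,\rho_k)$, i.e. the total cost currently paid at the interior nodes dominates the end-to-end cost.

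The modification I would then make is to keep $\rho_1$ on $h_1$ and overwrite $\rho_2,\dots,\rho_k$ all by $\rho_1$: no ordering outside the path changes, the interior nodes $w_1,\dots,w_{k-1}$ become crossing- and separation-free, and $w_0$ is unaffected. The only place cost can grow is $w_k$, where the ordering on $h_k$ went from $\rho_k$ to $\rho_1$, and the key claim to prove is a $1$-Lipschitz bound there: with all other edges at $w_k$ fixed, changing the ordering on $h_k$ alters the number of crossings at $w_k$ by at most $d_\times(\rho_k,\rho_1)$ and the number of separations by at most $d_\|(\rho_k,\rho_1)$. The reason is that, for any line pair $\{A,B\}$ passing through $w_k$, whether $A$ and $B$ cross (respectively are separated) there is determined solely by their relative order (respectively adjacency) on $h_k$, and each pair is counted at most once, so only pairs that actually flip between $\rho_k$ and $\rho_1$ can change the count, each by at most one. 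Putting this together, the cost increase at $w_k$ is at most $d_\times(\rho_k,\rho_1)+d_\|(\rho_k,\rho_1)$, which by the displayed inequality is at most the interior cost we removed; hence the objective does not increase, the new ordering is still optimal, it has no crossing or separation at $v=w_j$, and whatever crossings or separations used to sit at interior $W$-nodes now sit at $w_0$ or $w_k\notin W$, which is the claim.

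The step I expect to be the real work is the $1$-Lipschitz bound at $w_k$: since $w_k$ may have degree $\ge 3$, it has to be checked not only for ordinary same-edge crossings but also for split crossings (a pair continuing to two distinct edges) and for lines terminating at $w_k$, and the ``each pair counted once'' bookkeeping must be made precise; by contrast the interior nodes are immediate, since there the cost literally equals $d_\times(\cdot,\cdot)+d_\|(\cdot,\cdot)$ and the triangle inequality does everything. A minor additional point is the treatment of the node weights $w_\times(\cdot),w_\|(\cdot)$: the argument carries over verbatim as long as the endpoint weights do not exceed those of the interior $W$-nodes, which is the case for the unweighted (or station-penalty-only) objective relevant here.
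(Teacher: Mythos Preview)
Your proposal is correct and takes a genuinely different route from the paper. The paper argues by a two-case split: either all lines in $L^{*}=L(e)=L(e')$ occur together on every edge of $G$, in which case Lemma~\ref{LEM:linepairing} supplies an optimal bundled ordering with no event at $v$; or the lines of $L^{*}$ eventually split at some node $v'$, and the paper checks that any such $v'$ must lie outside $W$ (it has degree $\ge 3$, or is adjacent to two edges with different line sets), after which it simply asserts that the crossing can be ``trivially moved'' to $v'$ and that the same argument covers separations. Your approach instead isolates the maximal $W$-chain through $v$, models the cost at each interior degree-$2$ node as the Kendall--tau distance $d_\times$ plus the adjacent-pair symmetric-difference distance $d_\|$ between the two incident orderings, telescopes via the triangle inequality, and then overwrites the whole chain with the first ordering, bounding the damage at the far endpoint by the $1$-Lipschitz estimate you describe. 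This is more explicit about what ``moving'' actually means and does not rely on Lemma~\ref{LEM:linepairing}; it also treats crossings and separations uniformly rather than by a parallel ``the same argument holds'' remark. The paper's version is shorter but leaves precisely the step you flag as the real work --- the accounting at the higher-degree endpoint $w_k$, including split crossings and terminating lines --- to the reader. Your treatment of node weights (interior $W$-nodes have degree $2$, hence are stations carrying the maximal penalties $w_{\mathcal S\times},w_{\mathcal S\|}$, so the endpoint weight can only be smaller) coincides with the paper's justification for the weighted case.
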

\begin{proof}
We set $L^* = L(e) = L(e')$ and first consider crossings. There are two possible cases:
\begin{enumerate*}
	\item all $l \in L^*$ always occur together in each edge. Then Lemma~\ref{LEM:linepairing} holds, and the ordering of $L(e)$ is the same as of $L(e')$, inhibiting any crossings in $v$. We can thus ignore this case.
	\item Lemma~\ref{LEM:linepairing} does not hold and the lines in $L^*$ separate in some node $v' \neq v$. Then they either diverge into separate edges at $v'$, or a subset of them ends in $v'$. If they diverge, the degree of $v'$ has to be at least 3, implicating $v' \not\in W$. If some (or one) of them end in $v'$, then $v'$ has to be adjacent to at least 2 edges $e, e'$ with $L(e) \neq L(e')$, again implicating $v' \not\in W$. Such a $v'$ will thus indeed always exist. Under a uniform crossing penalty, we can trivially move the crossing from $v$ to $v'$ without affecting optimality. Under the penalty described in Sect.~\ref{SEC:crossing_placement}, optimality will also not be affected negatively, because $\text{deg}(v)$ is always 2, implying that $v$ is a station (Sect.~\ref{SEC:graph}).
\end{enumerate*} The same argument holds for line separations.
\end{proof}
\begin{lemma}\label{LEM:termini}
If for some edge $e$ all $l \in L(e)$ end in a node $v$ or $|L(e)| = 1$, the ordering of $L(e)$ will not affect the number of orderings or separations in $v$.
\end{lemma}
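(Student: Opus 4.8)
The plan is to argue directly from the way crossings and separations are modelled in Sect.~\ref{SEC:ordering}: every crossing or separation that is \emph{counted at a node $v$} is an event involving an ordered pair of lines that both pass through $v$. Concretely, each such event contributes a term of the form $x_{ee'AB}$, $x_{ee'e''AB}$, or $x_{ee'A\|B}$ to the objective, where $e,e'$ (and $e''$) are edges incident to $v$ and the lines $A,B$ both belong to at least two of these edges: for the non-split crossing and for the separation we need $A,B\in L(e)\cap L(e')$, and for the split crossing $A\in L(e)\cap L(e')$ and $B\in L(e)\cap L(e'')$; see Fig.~\ref{FIG:crossings} and Sect.~\ref{SEC:separation}. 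The key structural fact I will use is thus: an objective term anchored at $v$ can only involve a line that is incident at $v$ to two or more edges, i.e.\ a line that genuinely passes through $v$.

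First I would dispose of the trivial case $|L(e)|=1$: then $L(e)$ admits only one ordering, so there is nothing to choose. For the main case, assume every $l\in L(e)$ ends in $v$, so that $v$ is a terminus of each such line and hence $l$ is incident at $v$ only to the edge $e$. Then no line of $L(e)$ passes through $v$. By the structural fact above, no objective term anchored at $v$ can involve a pair $A,B\in L(e)$ (they never share a second edge at $v$), nor can it pair a line $A\in L(e)$ with a line $C\notin L(e)$ (again they share no second edge at $v$). Consequently none of the crossing or separation terms located at $v$ mentions any line of $L(e)$, so their number cannot depend on how $L(e)$ is ordered.

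To be complete I would also rule out an \emph{indirect} effect: the ordering of $L(e)$ is encoded solely by the variables $x_{el\le p}$ and the comparison/adjacency variables $x_{eA<B}$, $x_{eA\|B}$ built from them, and these feed only into event variables $x_{ee'\cdots}$ whose edge $e'$ shares an endpoint with $e$. At the endpoint $v$ no such $e'$ exists, since all of $L(e)$ terminates there and $e$ therefore shares no through-line with any other edge at $v$; the event variables that the ordering of $L(e)$ can influence all sit at the \emph{other} endpoint of $e$, not at $v$. Hence the collection of crossing and separation events at $v$, and in particular their count, depends only on the orderings of the edges incident to $v$ other than $e$, and is invariant under re-ordering $L(e)$, which is the claim.

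I do not expect a genuine obstacle here: the lemma is essentially a bookkeeping consequence of the definitions. The one point requiring care is to make the phrase ``crossing/separation \emph{at} $v$'' precise and to check that the position variables of $e$ cannot appear in an objective term anchored at $v$; the argument above settles this because every such term requires two edges at $v$ that jointly carry the relevant lines, whereas a line terminating at $v$ lies on only one edge there.
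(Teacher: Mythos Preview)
Your proposal is correct and follows the same idea as the paper's proof: lines that terminate at $v$ do not extend over $v$ and hence cannot participate in any crossing or separation there, while $|L(e)|=1$ leaves only one ordering. The paper dispatches this in two sentences, whereas you additionally spell out how the ILP variables $x_{ee'AB}$, $x_{ee'e''AB}$, $x_{ee'A\|B}$ all require a line to belong to two edges at $v$; this is more detailed but not a different argument.
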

\begin{proof}
In the first case, no $l \in L(e)$ extends over $v$, so they cannot introduce any crossing or separation. In the second case, all orderings of $L(e)$ are equivalent (there is only one).
\end{proof}

Using the above, we can simplify the input graph with the following pruning rules (Fig.~\ref{FIG:coreoptimgraph},~middle):
\begin{enumerate}[parsep=0.5mm, wide, labelwidth=0mm, itemindent=2.3mm]
	\setlength\itemsep{1pt}
	\item delete each node $v$ with degree 2 and $L(e) = L(e')$, and combine the adjacent edges $e = \{u, v\}$, $e' = \{v, w\}$ into a single new edge $ee' = \{u, w\}$ with $L(ee') = L(e) = L(e')$ (Lemma \ref{LEM:crossingmoveing}).
	\item collapse lines that always occur together into a single new line $k$ (Lemma~\ref{LEM:linepairing}). Weight crossings with $k$ by the number of lines it combines to avoid distorting penalties.
	\item remove each edge $e = \{u, v\}$ where $u$ and $v$ are termini for all $l \in L(e)$ (Lemma~\ref{LEM:termini}).
\end{enumerate}
\noindent
This core graph of $G$ may be further broken down into ordering-relevant connected components using the rules below (Fig.~\ref{FIG:coreoptimgraph},~right). The components can then be optimized separately.
\begin{enumerate}[parsep=0.5mm, wide, labelwidth=0mm, itemindent=2.3mm]
	\setlength\itemsep{1pt}
    \item cut each edge $e = \{u, v\}$ with $\left|L\left(e\right)\right| = 1$ into two edges $e' = \{u, v'\}$ and $e'' = \{v'', v\}$ (Lemma~\ref{LEM:termini}).
    \item replace each edge $e = \{u, v\}$ where $v$ has a degree $>1$ and is a terminus node for each $l \in L(e)$ with an edge $e' = \{u, v'\}$ where $v'$ is only connected to $e'$ (Lemma~\ref{LEM:termini}).
\end{enumerate}
\section{Rendering}\label{SEC:rendering}
\begin{figure*}
  \centering
  	\scalebox{1} {
	  	\begin{picture}(80, 56)
			\put(0,0){\includegraphics[width=0.17\textwidth]{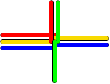}}
			\put(0,3){(1)}
		\end{picture}
	}
	\hfill
	\scalebox{1} {
	  	\begin{picture}(80, 56)
			\put(0,0){\includegraphics[width=0.17\textwidth]{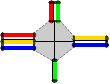}}
			\put(0,3){(2)}
		\end{picture}
	}
    \hfill
    \scalebox{1} {
	  	\begin{picture}(80, 56)
			\put(0,0){\includegraphics[width=0.17\textwidth]{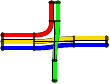}}
			\put(0,3){(3)}
		\end{picture}
    }
    \hfill
    \scalebox{1} {
	    \begin{picture}(80, 56)
			\put(0,0){\includegraphics[width=0.17\textwidth]{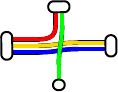}}
			\put(0,3){(4)}
		\end{picture}
	}
	\caption{The four steps of rendering a given line graph: (1) render ordered lines as edges, (2) free node area, (3) render inner connections, (4) render station overlays.}
	\label{FIG:renderingproc}
\end{figure*}
This section describes stage 3 of LOOM: given the line graph as computed in stage 1, and a line ordering for each edge as computed in stage 2, render the actual map.
We split this into four basic steps, as illustrated in Fig.~\ref{FIG:renderingproc}.

In the first step (1), we make use of the fact that only a single ordering is imposed on each $L(e)$ and render each $l \in L(e)$ by perpendicular offsetting the edge's geometry $\tau_{e}$ by $-w\left|L(e)\right|/2 + w\left(p_{e}(l)-1\right)$, where $w$ is the desired line width. In the next step (2), we make room for the line connections between edges by expanding the node fronts (and thus the node polygon). As a stopping criteria for this expansion, we simply use a maximum distance from the node front to its original position. The line connections in the node are then rendered by connecting all port pairs (3). In our experiments, we used cubic B\'ezier curves, but for schematic maps a circular arc or even a straight line might be preferable.

For the station rendering (4), we found that the buffered node polygon already yields reasonable results, although with much potential for improvement. We also experimented with rotating rectangles until the total sum of the deviations between each node front orientation and the orientation of the rectangle was minimized. Both approaches can be seen in Fig.~\ref{FIG:transitgraphvvs}.

\section{Evaluation}
We tested LOOM on the public transit schedules of six cities: Freiburg, Dallas, Chicago, Stuttgart, Turin and New York\footnote{with uncollapsed express lines}.
Table~\ref{TBL:datasets} provides the basic dimensions of each dataset and the time needed to extract the line graph.

\def\degv{\text{deg}(v)}
\def\Hs{\makebox[1.6mm][l]{\hspace{0.2mm}\footnotesize s}}
\def\Hm{\makebox[1.6mm][l]{\hspace{0.2mm}\footnotesize m}}
\def\Hh{\makebox[1.6mm][l]{\hspace{0.2mm}\footnotesize h}}
\def\Hhline{\\[.7mm]\hline}
\begin{table}
  \caption[]{Graph dimensions for our datasets Freiburg (FR), Dallas (DA), Chicago (CG), Stuttgart (ST), Turin (TO) and New York (NY) with extraction times from GTFS. ${\cal S}$ are the stations, $V$ the graph nodes, $E$ the graph edges and ${\cal L}$ the transit lines. $M$ is the maximum number of lines per edge.\label{TBL:datasets}}
  \vspace{-3mm}
  \centering
	\footnotesize
	{\renewcommand{\baselinestretch}{1.13}\normalsize
	\setlength\tabcolsep{3pt}
	\begin{tabular*}{.475\textwidth}{@{\extracolsep{\fill}} l r r r r r r r r r r r}
							& & & \multicolumn{4}{ c }{\footnotesize Line graph} & & \multicolumn{4}{ c }{\footnotesize Core  graph} \\
							\cline{4-7} \cline{9-12} \\[-2ex] \hline\noalign{\smallskip}
							& $t_{\text{extr}}$ & $|{\cal S}|$ & $|V|$ & $|E|$ & $|{\cal L}|$ & $M$ & & $|V|$ & $|E|$ & $|{\cal L}|$ & $M$ \Hhline
		FR	  & 0.7\Hs	& 74	&  80	&  81 &  5 & 4 &	& 20	& 21	&   5 & 4	\\
		DA & 3\Hs & 108	& 117	& 118	&  7 & 4 &	& 24	& 24	&   7 & 4	\\
		CG	& 13.5\Hs	& 143	& 153	& 154	&  8 & 6 & 	& 23	& 24	&   8 & 6	\\
		ST	  & 7.7\Hs	& 192	& 223	& 235	& 15 & 8 & 	& 51	& 60	&  15 & 8	\\
		TO	      & 4.9\Hs & 339	& 398 & 435	& 14 & 5 & 	& 91	& 124	&  14 & 5	\\
		NY
		                    & 3.7\Hs & 456	&  517	& 548	& 26 & 9 &	& 110	& 138	&  23 & 9	\Hhline
	\end{tabular*}}

\end{table}

For each dataset, we considered two versions of the line graph: the baseline graph and the core graph.
For each graph, we considered three ILP variants: the baseline ILP (B), the improved ILP (I) and the improved ILP with added separation penalty (S).
For each ILP, we evaluated two solvers: the GNU Linear Programming Kit (GLPK) and the COIN-OR CBC solver.
As most of the datasets (except Turin) still only had one connected component after applying the splitting rules described in Sect.~\ref{SEC:coreprobgraph}, we did not evaluate their application.
Tests were run on an Intel Core i5-6300U machine with 4 cores \`{a} 2.4 GHz and 12 GB RAM.
The CBC solver was compiled with multithreading support, and used with the default parameters and \texttt{threads=4}.
The GLPK solver was used with the feasibility pump heuristic (\texttt{fp\_heur=ON}), the proximity search heuristic (\texttt{ps\_heur=ON}) and the presolver enabled (\texttt{presolve=ON}).

For each node $v$, the penalty for a crossing between edge pairs ($\{A, B\}$ in Figure~\ref{FIG:crossings}, left) was $4 \cdot \degv$, for other crossings ($\{A, B\}$ in Figure~\ref{FIG:crossings}, right) it was $\degv$. The line separation penalty was $3\cdot\degv$. We found that these penalties produced nicer maps than a uniform penalty. This would imply $w_{\cal S\times} = 4 \cdot \max_{v\in V} \degv$ and $w_{\cal S\|} = 3 \cdot \max_{v\in V} \degv$. However, we found that moving some crossings or separations to stations with a degree greater than $2$ yielded results that looked better. Hence, crossings in $v \in \mathcal{S}$ were punished with $w_{\cal S\times}$ if $\degv = 2$ and otherwise with $3\cdot\degv$ (normal crossing) or $12\cdot\degv$ (edge-pair crossing). Similarly, in-station line separations where punished with $w_{\cal S\|}$ if $\degv = 2$ and $3\cdot\degv$ otherwise. Note that Lemma~\ref{LEM:crossingmoveing} still holds because we did not change the punishment for degree 2 stations. Also note that separations were only considered in $(S)$ and thus depended on the solver and the input order in $(B)$ and $(I)$.

\def\Hdimh{\footnotesize rows\hspace{0.3mm}{\footnotesize$\times$}\hspace{0.3mm}cols}
\def\Hdim#1#2{#1\hspace{0.1mm}{\footnotesize$\times$}\hspace{0.1mm}#2}
\def\Htglpk{\footnotesize GLPK}
\def\Htcbc{\footnotesize CBC}
\def\Hlong{---\phantom{\Hs}}
\def\Hno{---}
\def\Hgr{$>$}
\renewcommand*{\thefootnote}{\fnsymbol{footnote}}
\begin{table}
  \caption[]{Dimensions and solution times for Chicago (CG), Stuttgart (ST), Turin (TO) and New York (NY) and our three ILPs: baseline (B), improved (I), and with line separation penalty (S), with or without reduction to the core graph. A time of --- means we aborted after 12 hours. The last two columns show the number of crossings ($\times$) and separations ($||$) after optimization.\label{TBL:evalres}}
  \vspace{-3mm}
	\centering
	\footnotesize
	{\renewcommand{\baselinestretch}{1.13}\normalsize
		\setlength\tabcolsep{2pt}
	\begin{tabular*}{.475\textwidth}{@{\extracolsep{\fill}} l@{\hskip 1.2mm} c r r@{\hskip 2.5mm} r r r r@{\hskip 1.5mm}r@{\hskip 1mm}r r r}
							&& \multicolumn{3}{c}{\footnotesize On baseline graph} & & \multicolumn{3}{c}{\footnotesize On core graph} \\
							\cline{3-5} \cline{7-9} \\[-2ex] \hline\noalign{\smallskip}
							&& \Hdimh & \Htglpk & \Htcbc &  & \Hdimh & \Htglpk & \Htcbc & $\times$ & $||$ \Hhline


		CG   & B & \Hdim{41k}{861}   & \Hlong &  \Hlong & &  \Hdim{8.2k}{266} &  \Hlong & 47\Hm &   22 &   4-7 \\
              & I & \Hdim{1.7k}{1.2k} &  18\Hs & 0.4\Hs & &   \Hdim{484}{352} & 0.4\Hs &  0.2\Hs &   22 &   4-7 \\
		          & S & \Hdim{2.2k}{1.4k} &  47\Hm &   25\Hs & &   \Hdim{595}{405} &   26\Hs &  3.8\Hs &   27 &     0\Hhline

		ST & B	& \Hdim{224k}{2.5k} & \Hlong &  \Hlong & &   \Hdim{44k}{999} &  \Hlong &  \Hlong & \Hno &  \Hno \\
        		  & I & \Hdim{5.1k}{3.4k} & \Hlong &  7.1\Hs & & \Hdim{1.9k}{1.4k} &   10\Hs &  1.4\Hs &   65 & 8-15 \\
		          & S	& \Hdim{6.6k}{4.1k}	& \Hlong &  4.5\Hm & & \Hdim{2.5k}{1.6k} &  \Hlong &   36\Hs &   69 &     3 \Hhline

		TO     & B & \Hdim{24k}{2.1k}  & \Hlong &   \Hlong & &   \Hdim{13k}{1k}     &  \Hlong  &  \Hlong &\Hno& \Hno   \\
		          & I & \Hdim{4k}{2.8k}   & 32\Hm & 0.6\Hs   & &   \Hdim{1.9k}{1.4k}  &  7.1\Hs &  0.3\Hs &    78     &   6-10\\
		          & S & \Hdim{5k}{3.3k}   & \Hlong  &  9.1\Hs   & &   \Hdim{2.4k}{1.6k} &  \Hlong   &  4.6\Hs &    81     &     2 \Hhline

		NY  & B	& \Hdim{229k}{5.2k} & \Hlong &  \Hlong & &  \Hdim{96k}{2.3k} &  \Hlong &  \Hlong & \Hno &  \Hno \\
        		  & I &  \Hdim{11k}{7.1k} & \Hlong &  2.2\Hs & & \Hdim{4.5k}{3.1k} &   \Hlong &  0.7\Hs &  127 &     6-14 \\
		          & S	&  \Hdim{14k}{8.5k} & \Hlong &  2.4\Hm & & \Hdim{5.7k}{3.7k} &  \Hlong &   55\Hs &  132	&     2 \Hhline
	\end{tabular*}}
\end{table}

Table~\ref{TBL:evalres} shows the results of the ordering optimizations for 4 of our 6 datasets. With (I), the optimal order could be found in under 1.5 seconds for each of those datasets, and in under 1 minute with (S). For medium-sized networks an optimal solution for (S) was usually found in under $5$ seconds. Although the ILPs for (S) were only slightly larger than for (I), optimization on the core graph took $35$ times longer on average (with CBC). (B) could not be optimized in under $12$ hours for all datasets except Chicago. In general, CBC outperformed GLPK for larger datasets, sometimes dramatically. As expected, core graph reduction made the ILPs significantly smaller. On average, the number of rows decreased by $61\,\%$ and the number of columns by $59\,\%$ for (I). For (S), the decrease was $62\,\%$ and $60\,\%$, respectively.

%

\section{Conclusions and Future Work}\label{SEC:conclusions}

We have evaluated LOOM and shown that it produces geographically accurate transit maps. The whole pipeline took less than 1 minute for all considered inputs (including the rendering step).
As the line graph construction required more time than the subsequent ILP solution for some datasets, faster algorithms for extracting the line graph would be of interest.

The ideas behind LOOM may be useful also in a non-transit scenario.
For example, one closely related problem is that of wire routing in integrated-circuit design.
There, stations correspond to chips and other elements (which in wire routing are indeed of polygonal form), lines correspond to wires, and the geographical course of the lines may correspond to a pre-existing wiring.

%
%
%

\bibliographystyle{ACM-Reference-Format}
\bibliography{transitmaps}

\end{document}